\newcommand{\ket}[1]{\left\vert{#1}\right\rangle}
\newcommand{\qw}[1][-1]{\ar @{-} [0,#1]}
\newcommand{\qwx}[1][-1]{\ar @{-} [#1,0]}
\newcommand{\cw}[1][-1]{\ar @{=} [0,#1]}
\newcommand{\cwx}[1][-1]{\ar @{=} [#1,0]}
\newcommand{\gate}[1]{*{\xy *+<.6em>{#1};p\save+LU;+RU **\dir{-}\restore\save+RU;+RD **\dir{-}\restore\save+RD;+LD **\dir{-}\restore\POS+LD;+LU **\dir{-}\endxy} \qw}
\newcommand{\meter}{\gate{\xy *!<0em,1.1em>h\cir<1.1em>{ur_dr},!U-<0em,.4em>;p+<.5em,.9em> **h\dir{-} \POS <-.6em,.4em> *{},<.6em,-.4em> *{} \endxy}}
\newcommand{\control}{*-=-{\bullet}}
\newcommand{\controlo}{*!<0em,.04em>-<.07em,.11em>{\xy *=<.45em>[o][F]{}\endxy}}
\newcommand{\ctrl}[1]{\control \qwx[#1] \qw}
\newcommand{\targ}{*{\xy{<0em,0em>*{} \ar @{ - } +<.4em,0em> \ar @{ - } -<.4em,0em> \ar @{ - } +<0em,.4em> \ar @{ - } -<0em,.4em>},*+<.8em>\frm{o}\endxy} \qw}
\newcommand{\gategroup}[6]{\POS"#1,#2"."#3,#2"."#1,#4"."#3,#4"!C*+<#5>\frm{#6}}
\newcommand{\rstick}[1]{*!L!<-.5em,0em>=<0em>{#1}}
\newcommand{\lstick}[1]{*!R!<.5em,0em>=<0em>{#1}}
\newcommand{\Qcircuit}{\xymatrix @*=<0em>}
\newcommand{\clet}{\mathbf{let}\ }
\newcommand{\cin}{\ \mathbf{in}\ }
\newcommand{\cif}{\ \mathbf{if}\ }
\newcommand{\cthen}{\ \mathbf{then}\ }
\newcommand{\rif}[1]{\textrm{IF-}\ket{#1}}
\newcommand{\celse}{\ \mathbf{else}\ }
\newcommand{\f}[1]{\textrm{\sf{#1}}}
\newcommand{\alice}{\f{alice}}
\newcommand{\bob}{\f{bob}}
\newcommand{\epr}{\f{epr}}
\newcommand{\teleport}{\f{teleport}}
\newcommand{\cnot}{\textit{cnot}\,}
\newcommand {\aplica}[1]{\f{#1}}
\newcommand{\flecha}[1]{\rightarrow_{#1}}
\newcommand{\tte}[1]{\stackrel{#1}\twoheadrightarrow}
\newcommand{\tps}[2]{\stackrel{#1}\rightarrow_{#2}}
\newcommand{\te}[2]{\{\langle{#1},\, {#2}\rangle\}}
\newcommand{\meas}{\textrm{M}}
\newcommand{\id}{\textrm{Id}}
\DeclareMathAlphabet{\letralinda}{OT1}{pzc}{m}{it}
\newcommand{\hist}{\letralinda{H}}
\begin{document}
\begin{frontmatter}
  \title{Measurements and confluence in quantum lambda calculi with explicit qubits}
  \author[UNR]{Alejandro D\'{\i}az-Caro\thanksref{adc-email}},
  \author[LIG]{Pablo Arrighi\thanksref{pa-email}},
  \author[UVa]{Manuel Gadella\thanksref{mg-email}} and
  \author[LIG]{Jonathan Grattage\thanksref{jg-email}}
  \address[UNR]{Departamento de Ciencias de la Computaci\'on, Universidad Nacional de Rosario, Argentina}
  \address[LIG]{Laboratoire d'Informatique de Grenoble, Universit\'e de Grenoble, France}
  \address[UVa]{Departamento de F\'{\i}sica Te\'orica, At\'omica y \'Optica, Universidad de Valladolid, Spain}

\thanks[adc-email]{Email: \href{mailto:alejandro.diaz-caro@imag.fr} {\texttt{\normalshape alejandro.diaz-caro@imag.fr}}}
\thanks[pa-email]{Email: \href{mailto:pablo.arrighi@imag.fr} {\texttt{\normalshape pablo.arrighi@imag.fr}}}
\thanks[mg-email]{Email: \href{mailto:gadella@fta.uva.es} {\texttt{\normalshape gadella@fta.uva.es}}}
\thanks[jg-email]{Email: \href{mailto:jonathan.grattage@ens-lyon.fr}{\texttt{\normalshape jonathan.grattage@ens-lyon.fr}}}

\begin{abstract} 
This paper demonstrates how to add a measurement operator to quantum
$\lambda$-calculi. A proof of the consistency of the semantics is given through
a proof of confluence presented in a sufficiently general way to allow this
technique to be used for other languages. The method described here may be
applied to probabilistic rewrite systems in general, and to add measurement to
more complex languages such as {\em QML} \cite{jon} or {\em Lineal}
\cite{arrighi3}\cite{arrighi1}, which is the subject of further research. 
\end{abstract}
\begin{keyword}
    Quantum lambda calculus, Measurement, Confluence, Probabilistic rewrite
system
\end{keyword}
\end{frontmatter}

\section{Introduction}
In the quest to develop quantum programming languages, quantum extensions of
functional languages provide a promising route, hence the explosion of works on
quantum lambda calculi and quantum functional languages
\cite{arrighi1}\cite{jon}\cite{lambdaclassical}\cite{qlambda}. 
The current language proposals can be split into two categories. In the first
category, qubits are manipulated as pointers towards a quantum memory
\cite{prost}\cite{lambdaclassical}, thus the syntax does not provide an explicit
description of the qubits. It does, however, together with a linear type system,
give a convenient and coherent way to handle operations on qubits. A drawback is
that the semantics of quantum operations cannot be given intrinsically in the
syntax, as this would require the actual state of the quantum memory to be
known. In the second category of languages
\cite{arrighi1}\cite{jon}\cite{qlambda} the description of the qubits is part of
the programming language, and no type system is required. An advantage here is
that the entire semantics can be expressed simply as a rewrite system between
terms of the language. This turns into a weakness regarding measurements,
because the inherently probabilistic nature of measurement makes it difficult to
express as part of a rewrite system. In fact, neither category of languages
allow this feature. \cite{arrighi1}\cite{qlambda}

The case of Altenkirch and Grattage's {\em QML} \cite{jon} is not so clear-cut,
but it does illustrate this 
difficulty. \emph{QML} includes
measurements with an operational semantics given in terms of quantum circuits.
However, the corresponding algebraic theory \cite{vizzotto} stands only for a
pure quantum subset of the language, with classical-control and measurement
omitted.

Van Tonder's $\lambda_q$ \cite{qlambda} is a higher-order untyped lambda
calculus which includes quantum properties. This calculus carries a history
track to keep the necessary information to invert reductions, to ensure that the
global computation process is unitary. It is closely related to linear logic,
with the syntax being a fragment of the one introduced by Wadler
\cite{Wadler94}, extended with constants to represent quantum entities such as
qubits and gates. Linearity concepts are used to distinguish definite terms from
arbitrary superposition terms.  These syntactic markers constitute the main
difference with Arrighi and Dowek's {\em Lineal} \cite{arrighi3}\cite{arrighi1},
which is more permissive. As mentioned
previously, measurement is not included in these two proposals.

The work presented here shows how to add measurement to a quantum lambda
calculus with explicit qubits in an elegant manner. This is done with full
details for the $\lambda_q$-calculus, with a proof that confluence, and hence
the consistency of the operational semantics, is preserved by this extension.
Although this calculus does not need a proof of confluence in the original setting, due to the fixed reduction strategy, this proof is necessary in the presence of measurement.  Furthermore,  it is non-trivial and  has the novelty of showing the confluence in a
probabilistic setting with the branching produced by the measurement.
The methods illustrated here are general, and applying these techniques to {\em
QML} and {\em Lineal} is in progress.
\bigskip

In contrast to measurement in classical mechanics, which gives the value of a
given observable with an associated error, measurements in quantum mechanics
have an intrinsically probabilistic character. That is, a quantum measurement
can give, \emph{a priori}, a certain number of results, each one with some
finite probability. Moreover, the state of the system after the measurement is
changed in an irreversible manner by the act of measurement. This unintuitive
behaviour is of acute importance in quantum information processing.

Measurement is a key property in many quantum information processing tasks, such
as quantum cryptography, superdense coding, and in quantum search algorithms.
Not having measurements can lead to misinterpretations. Consider as an example
the quantum teleportation algorithm with deferred measurement \cite{qlambda} as
defined in Fig. \ref{alg:telep}. Here it is unclear if Alice and Bob can be
physically separated, as all the channels used are quantum channels. An obvious
question arises: why use this algorithm if there is a quantum channel between
Alice and Bob? Measuring the final state will result in the original
logical-qubit having been transferred to Bob. The problem is not one of
correctness, but of interpretation.

Secondly, understanding measurement is essential to avoid misinterpreting
quantum computation as a whole (e.g. why quantum computation does not lead
straightforwardly to an exponential jump in complexity). This work takes the
view that in order to understand the possibilities and limitations of quantum
computation, measurement needs be formalised in an elegant manner. Note that the
projective measurement discussed in this paper is not the only possibility for a
quantum measurement, but it is one of the simplest. In addition, any quantum
measurement can be reproduced by the action of a unitary mapping and a
projective measurement. 

\begin{figure}[htp]
 \centering
\begin{minipage}{6.5cm}
{\small\medskip
\noindent$\teleport\ q \rightarrow
   	\begin{aligned}[t]
		&\clet (e_1, e_2) = \epr \cin \\
		&\clet (q', y') = \alice\ (q, e_1) \cin \\
		&\quad \bob\ (q', y', e_2)
	\end{aligned}$

where

$\begin{array}{l}
\alice\ (q, e_1) \rightarrow
     	\begin{aligned}[t]
       			&\clet (q', y') = \cnot\ (q, e_1) \cin \\
			& ((H\ q'), y')
     	\end{aligned}\\
\bob\ (q', y', e_2) \rightarrow
    	\begin{aligned}[t]
      			&\clet (y'', e_2') = \textit{cX}\,\ (y', e_2) \cin \\
      			&\clet (q'', e_2'') = \textit{cZ}\,\ (q', e_2') \cin \\
      			&\quad (q'', y'', e_2'')
    	\end{aligned}\\
\epr \equiv \cnot\ ((H\ 0),0)\end{array}$}
\medskip
\end{minipage}
\   \
\hfill \begin{minipage}{7cm}
\begin{eqnarray*}
\Qcircuit @C=.7em @R=.7em {
\lstick{q} & \qw 				& \qw	 	& \ctrl{1} \qw 
& \gate{H} \qw & \qw 			& \ctrl{2} \qw 	&  \meter  \\
\lstick{\ket{0}}	& \gate{H} \qw		& \ctrl{1}	& \targ \qw    
& \qw          & \ctrl{1} 		& \qw 			&  \meter  \\
\lstick{\ket{0}} 	&  \qw 				& \targ \qw	& \qw   
      	& \qw          & \gate{X} \qw  	& \gate{Z} \qw 	& \qw &  \lstick{q}
\gategroup{1}{8}{2}{8}{.7em}{--}
}
\end{eqnarray*}
{\begin{center}\small Circuit for the quantum teleportation algorithm with
deferred measurement\end{center}}
\end{minipage}
 \caption{Teleportation algorithm in non-extended $\lambda_q$.}
 \label{alg:telep}
\end{figure}
\bigskip

In the second section of this paper, the process of adding measurement is shown
with full details for van Tonder's $\lambda_q$. The section concludes with an
implementation of the teleportation algorithm in extended $\lambda_q$. Section 3
discusses and proves confluence for extended $\lambda_q$. Finally, section 4
closes with details of ongoing and future work.

\section{Adding measurement}

Adding a measurement operator to a quantum lambda calculus can be achieved with
only small changes to the grammar. In this section we show how to change the
syntax, add well-formedness rules for terms, and give the operational semantics.

\subsection{Syntax}
To account for measurements, the grammar of $\lambda_q$ must be extended with a
family of measurement operators $M_I$, which measure the qubits indicated by the
set $I$. In addition, it is necessary to make the syntax for qubits precise,
because their ``shape'' is needed by the measurement operator. This is achieved
in a manner following on from {\em Lineal} \cite{arrighi1} and \emph{QML}
\cite{jon}. Regarding van Tonder's original syntax, the only significant change
is to split ``constants'' into qubit-constants, measurement-constants and
gate-constants. The extended syntax is shown in Figure \ref{fig:syntax} and the
added rules of well-formedness are given in Figure \ref{fig:well-formed}.

A term is a pre-term produced by the syntax in Figure \ref{fig:syntax} which
follows the rules for well-formedness given by van Tonder \cite{qlambda} plus
the rules in Figure \ref{fig:well-formed}. Amongst these rules note that
\textit{M} and \textit{Gate} state that $M_I$ and $c_U$ are simply constant
symbols. \textit{Zero} and \textit{One} force $\ket{0}$ and $\ket{1}$
respectively to be non-linear terms. \textit{Tensor} and \textit{!Tensor} allow
tensorial products between qubits to be written.
Although terms like $(c_U\ q)\otimes q$ are not allowed, they are a contraction
for $c_{U\otimes I}\ (q\otimes q)$.
\textit{Superposition} provides a way of writing qubits in superpositions, and
\textit{Simplification} allows subterms with the scalar factor $0$ to be
removed.

Note that a term with a pattern $!q\otimes q$ is not well-formed, but there is
always an equivalent term which can express this in a well-formed way. For
example, the term $!\ket{0}\otimes(\alpha!\ket{0}+\beta!\ket{1})$ is not
well-formed, however, it is equivalent to
$\alpha(!\ket{0}\otimes!\ket{0})+\beta(!\ket{0}\otimes!\ket{1})$ which is
well-formed.

\begin{figure}[!htbp]
\centering
$$\begin{array}{rll}
t ::=	&			& \emph{Pre-terms:} \\
	& x			& \qquad \emph{variable} \\
	& (\lambda x.t)	& \qquad \emph{abstraction} \\
	& (t\ t)		& \qquad \emph{application} \\
	& !t			& \qquad \emph{nonlinear term} \\
	& (\lambda !x.t)	& \qquad \emph{nonlinear abstraction} \\
	& c_U			& \qquad \emph{gate-constant} \\
	& q			& \qquad \emph{qubit-constant} \\
	& M_I			& \qquad \emph{measurement-constant} \\
q::= 	&			& \emph{Qubit-constants:} \\
	& \ket{0}\;|\;\ket{1}	& \qquad \emph{base qubit} \\
	& (q \otimes q)	& \qquad \emph{tensorial product} \\
	& (q + q)		& \qquad \emph{superposition} \\
	& \alpha (q)		& \qquad \emph{scalar product} \\
c_U::=	&			& \emph{Gate-constants:} \\
	& H\ |\ \cnot\ |\ X\ |\ Z\ |\ \dots  &
\end{array}$$
\caption{Syntax for extended $\lambda_q$.}
\label{fig:syntax}
\end{figure}

\begin{figure}[!htbp]
 \centering
$$\begin{array}{c}
\inference{I \subset \mathbb{N}}{\vdash M_I}[M]\hspace{4cm}
\inference{}{\vdash c_U}[Gate]\\
\\
\inference{}{\vdash !\ket{0}}[Zero]\hspace{4cm}
\inference{}{\vdash !\ket{1}}[One]\\
\\
\inference{\Gamma \vdash q_1 \qquad \Delta \vdash q_2}{\Gamma, \Delta \vdash q_1
\otimes q_2}[Tensor]\hspace{1cm}
\inference{\Gamma \vdash !q_1 \qquad \Delta \vdash !q_2}{\Gamma, \Delta \vdash
!q_1 \otimes !q_2}[!Tensor]\\
\\
\inference{\sum\limits_{i=0}^{2^n-1}{|\alpha_i|^2}=1 \qquad \alpha_i \in
\mathbb{C}, i=0\dots 2^n-1}{\vdash \alpha_0 (!\ket{0}\otimes \cdots \otimes
!\ket{0}) + \cdots + \alpha_{2^n-1} (!\ket{1}\otimes \cdots \otimes
!\ket{1})}[Superposition]\\
\\
\inference{\alpha_r=0, r \in \{0,\dots,2^n-1\} \qquad \Gamma \vdash
\sum\limits_{i=0}^{2^n-1}{\alpha_i q_i}}{\Gamma \vdash
\sum\limits_{\begin{subarray}{l} i=0 \\ i\neq r \end{subarray}}^{2^n-1}{\alpha_i
q_i}}[Simplification]
\end{array}$$
\caption{Rules for well-formedness added to $\lambda_q$.}
\label{fig:well-formed}
\end{figure}

\begin{note}
The usual $\clet$construction will be used as a useful shorthand, defined as:
\[
\begin{array}{c}
\clet x=t \cin u\textrm{ gives }(\lambda x.u)\ t \\
\clet !x=t \cin u\textrm{ gives }(\lambda!x.u)\ t
\end{array}
\]
It is interesting to note that a cloning machine such as $\lambda x.(\clet y=x
\cin y\otimes y)$ is syntactic-sugar for $\lambda x.((\lambda y. y \otimes y)\
x)$, which is forbidden by the well-formedness rules since $y$ is linear (it
cannot appear twice), and moreover there is no way to tensor variables: they can
only be qubit-constants.

$\clet$can also be used over lists, as per van Tonder's $(x, y)$, but they are
written here as a tensor product. For example, the term
$\clet x \otimes y = M_{\{1,2\}}\ (q_1\otimes q_2) \cin t$
is the same as
$\clet (x,y) = (M_{\{1\}}\ q_1, M_{\{1\}}\ q_2) \cin t$.
Additionally, note that $x\otimes y$ is used following van Tonder's $(x, y)$; it
is an overloading of the operator $\otimes$, denoting both the tensor product
between qubits and also list constructors.
\end{note}

\subsection{Operational Semantics}
Measurement in quantum systems is an inherently probabilistic operation.
Following Di Pierro \textit{et al.} \cite{prolambda}, where a probabilistic
rewrite system is defined over a $\lambda$-calculus, the operational semantics
for measurement in extended $\lambda_q$ is defined as follows:
$$\begin{array}{c}
\inference[M: ]{q=\sum\limits_{u=0}^{2^m-1}{\alpha_u q^{(u)}}}
		{\hist;(M_I\  q)\flecha{p_w} \sum\limits_{u \in
C(w,m,I)}{\frac{\alpha_u}{\sqrt{p_w}} q^{(u)}}}[$\forall i\in I, 1\leq i\leq m$]
\end{array}$$

\noindent where
\begin{itemize}
	\item $w=0,\dots,2^{|I|}-1$.
	\item $q^{(u)} =
!q_{1}^{(u)}\otimes!q_{2}^{(u)}\otimes\dots\otimes!q_{m}^{(u)}$ with
$!q_{k}^{(u)}=!\ket{0} \textrm{ or } !\ket{1}$ for $k=1\dots m$.
	\item $C(w,m,I)$ is the set of binary words of length $m$ such that they
coincide with $w$ on the letters of index $I$.
	\item $p_w = \sum\limits_{u \in C(w,m,I)}{|\alpha_u|^2}$. 
	\item The notation $t\flecha{p}t'$ means that $t$ goes to $t'$ with
probability $p$.
\end{itemize}
\medskip

It is instructive to look at an example of this rule in action:

\begin{example}
Let $m=5$, $I=\{2,3,5\}$ and $q=\sum\limits_{u=0}^{2^5-1}{\alpha_u
(\bigotimes\limits_{k=1}^5 !q_{k}^{(u)})}$ with $!q_{k}^{(u)}=!\ket{x}$ and $x$
is the $k^{th}$ bit in the binary representation of $u$. According to the
previous rule, $(M_I\ q)$ will generate $2^{|I|}=8$ different outputs
(corresponding to the different possible values of the qubits $2$, $3$ and $5$,
which are measured). Take as an example the output $w=2$ (its 3-bit binary
representation is $010$). Hence, $C(2,5,I) = \{4, 6, 20, 22\}$ which are the
numbers $u$ between $0$ and $2^5-1$ whose binary representation is of the form
$x01y0$ (so they coincide with $w$, if we compare the bits $2$, $3$ and $5$ of
$u$ with the bits $1$, $2$ and $3$ of $w$). Then, the final term is:
$$\frac{\alpha_{4}}{\sqrt{p_2}} q^{(4)} + \frac{\alpha_{6}}{\sqrt{p_2}} q^{(6)}
+ \frac{\alpha_{20}}{\sqrt{p_2}} q^{(20)} + \frac{\alpha_{22}}{\sqrt{p_2}}
q^{(22)}$$
\noindent where \\
$\begin{array}{rcl}
q^{(4)}&=&!\ket{0}\otimes!\ket{0}\otimes!\ket{1}\otimes!\ket{0}\otimes!\ket{0}\\
q^{(6)}&=&!\ket{0}\otimes!\ket{0}\otimes!\ket{1}\otimes!\ket{1}\otimes!\ket{0}\\
q^{(20)}&=&!\ket{1}\otimes!\ket{0}\otimes!\ket{1}\otimes!\ket{0}\otimes!\ket{0}
\\
q^{(22)}&=&!\ket{1}\otimes!\ket{0}\otimes!\ket{1}\otimes!\ket{1}\otimes!\ket{0}
\end{array}$\\
$p_2=\sum\limits_{u \in
C(2,5,I)}{|\alpha_u|^2}=|\alpha_{4}|^2+|\alpha_{6}|^2+|\alpha_{20}|^2+|\alpha_{
22}|^2$

\noindent which represents the following quantum state:
$$\frac{1}{\sqrt{p_2}}(\alpha_4\ket{00100}+\alpha_6\ket{00110}+\alpha_{20}\ket{
10100}+\alpha_{22}\ket{10110})$$
\end{example}

\subsection{Conditional statements}
Measurement as a feature is only useful if the result of the measurement can be
used to determine the future evolution of the program. Hence a conditional
statement similar to that given in QML is needed. However, in contrast to QML's
$\mathbf{if}$ statements \cite{jon}, only base-qubits are allowed in the
condition. This is all that is required, as the \emph{if} structure is only
needed to provide a way to read the output of measurements.
Conditional statements are realised by adding the following to the syntax:
$$\cif t_1 \cthen t_2 \celse t_3$$
and the operational semantic is given by:
$$\begin{array}{c}
\inference{}{\cif !\ket{0} \cthen t_1 \celse t_2 \flecha{1} t_1 }[$\rif{0}$] \\
\\
\inference{}{\cif !\ket{1} \cthen t_1 \celse t_2 \flecha{1} t_2 }[$\rif{1}$] \\
\end{array}$$

Note that as the condition may be not be a base-qubit, it is not guaranteed that
the whole term will reduce.
\medskip

This addition is required, as without such an \emph{if} statement such as this
being added to the language,
this extension to measurements would have been equivalent
to a simple extension from unitary constants to quantum operation constants.

\subsection{Example: Teleportation algorithm}
With the rules developed so far, the teleportation algorithm can be rewritten as
shown in Fig. \ref{alg:telepM}.

\begin{figure}[htp]
 \centering
\begin{minipage}{6.3cm}
{\small\medskip
\noindent$\teleport\ q \flecha{1}
 \begin{aligned}[t]
 	&\clet x\otimes y = \epr \cin \\
    	&\clet b_1\otimes b_2 = M_{\{1,2\}}\ \alice\ q\ x \cin \\
    	&\quad \bob\ b_1\ b_2\ y
 \end{aligned}$

where	

$\begin{array}{l}
\alice\ q\ x \flecha{1}\begin{aligned}[t]
	                           & \clet r\otimes w = \cnot\ q\otimes x \cin
\\
	                           &  ((H\ r)\otimes w)
	                          \end{aligned}\\
\bob\ b_1\ b_2\ y\flecha{1}\aplica{zed}\ b_1\ (\aplica{ex}\ b_2\ y) \\
\aplica{ex}\ b\ x\flecha{1}\cif b \cthen (X\ y) \celse y \\
\aplica{zed}\ b\ x\flecha{1}\cif b \cthen (Z\ x) \celse x \\
\epr \equiv \cnot\ ((H\ !\ket{0})\otimes !\ket{0})
\end{array}$}
\medskip
\end{minipage}
\   \
\hfill \begin{minipage}{7cm}
\begin{eqnarray*}{
\Qcircuit @C=0.55em @R=.75em {
\lstick{q} & \qw 	   & \qw & \ctrl{1} \qw & \gate{H} \qw & \meter	&
\controlo \cw \cwx[1] \\
\lstick{\ket{0}}    & \gate{H} \qw & \ctrl{1}		& \targ \qw    & \qw   
& \meter & \controlo \cw \cwx \\
\lstick{\ket{0}}    & \qw 	   & \targ \qw		& \qw          & \qw
& \qw    & \gate{Z^{b_1}X^{b_2}} \cwx & \qw & \rstick{q}
}}
\end{eqnarray*}

{\begin{center}
\small Circuit for the original quantum teleportation algorithm                 
                                              \end{center}}
\end{minipage}
\caption{Teleportation algorithm in extended $\lambda_q$} \label{alg:telepM}
\end{figure}

\section{Confluence}
When defining a language, a grammar must also be provided (how to construct
terms), and a semantics (how these terms compute). The semantics can be
denotational (terms are mapped to elements of a semantic domain, each
corresponding to what is computed by the term) or operational (terms are mapped
into other terms, with each transition corresponding to a computational step).
Clearly it must be proved that the semantics provided is unambiguous and
consistent. For example, the semantics will usually induce an equational theory
upon terms (via equality in the semantics domain or by equating two terms if one
reduces to the other), and it is important that this theory should not equate
all terms.

In $\lambda_q$ a consistent equational theory is given. However, adding
measurement does not correspond to a simple system for equational reasoning. It
is not possible to proceed by replacing terms by equal terms according to any
equational theory, since measurement is a probabilistic operation, and each
reduction instance could produce different terms that are impossible to
reconcile in the system. In the presence of an operational semantics, a usual
method of proving the consistency result is to provide a proof of confluence. 
This property states that the order in which the transition rules are applied
does not matter to the end result, thus removing any ambiguity. In this section
it is shown how such a study of confluence can still be carried through, even in
the presence of probabilities. As $\lambda_q$ provides a fixed reduction strategy, proving confluence in the original language is trivial, because there is only one possible reduction at each step. However, this is not the case in the presence of measurement, where proving confluence is non-trivial.

\subsection{Definitions and lemmas}
Whilst the above-mentioned probabilistic reductions are an elegant and concise
way to present the operational semantics, the study of confluence is not
immediate in this setting. For confluence, it is necessary to prove that if any
term $t$ can reduce to $u$ and to $v$, then there exists a $w$ such that
$u\rightarrow w \wedge v\rightarrow w$. However, in a probabilistic calculus it
could be that $t\flecha{p}u$ and $t\flecha{q}v$, where $p$ and $q$ represent the
probability of the respective reduction occurring, and there is no $w$ that both
$u$ and $v$ could reduce to. For example, given $M_{\{1\}}$, a measurement
operator in the computational basis, it follows that $M_{\{1\}}\
(\alpha\ket{0}+\beta\ket{1})\flecha{|\alpha|^2}\ket{0}$ and $M_{\{1\}}\
(\alpha\ket{0}+\beta\ket{1})\flecha{|\beta|^2}\ket{1}$. However, there is no $w$
such that $\ket{0}\flecha{p}w$ and $\ket{1}\flecha{q}w$.

A na\"{i}ve way to deal with this would be to assume that if there is some
normal form that can be reached with a certain probability, then by following
any path it must to be possible to reach the same normal form with the same
probability. However, this definition is not rigorous, and not applicable to
terms without a normal form. Hence, it does not allow the development of a
formal proof of confluence.

Probabilistic transitions need to be abstracted out in order to allow only one
possible normal form for each term, and to deal with terms without normal form.
With this aim, the following definition gives a notion of confluence for
probabilistic calculi:

\begin{definition}\label{def:termensable}
 A term ensemble $\te{t_i}{\alpha_i}$ is defined as a collection of terms $t_i$,
each with an associated probability $\alpha_i$, such that
$\sum\limits_{i}{\alpha_i}=1$.
\end{definition}

Note that given a term $t$, it may be considered as a term ensemble $\te{t}{1}$.

\begin{example}
Consider the term ensemble
$\{\langle t_1,\, \frac{1}{2}\rangle,\, \langle t_1,\, \frac{1}{4}\rangle,\,
\langle t_2,\,\frac{1}{4}\rangle\}$,
where the term $t_1$ appears twice. 
By summing the probabilities of any equivalent terms, this ensemble can be
identified with the more compact ensemble
$\{\langle t_1,\,\frac{3}{4}\rangle,\, \langle t_2,\, \frac{1}{4}\rangle\}$.
\end{example}

\begin{remark}
 Throughout this paper the symbol $=$ will be used for both
$\alpha$-equivalences and equalities. When referring to a set, \emph{i.e.} where
each element appears once, it is considered to be modulo $\alpha$-equivalence.
\end{remark}

The appropriate steps such that $\{\langle t,\, \alpha\rangle,\,\langle t,\,
\gamma\rangle\}$ is identified with $\te{t}{\alpha+\gamma}$ need to be taken.
Definition \ref{def:simpensemble} formalises this equivalence:

\begin{definition}\label{def:simpensemble}
Let $\f{first}$ be a function that takes a term ensemble and returns a set
defined by
$$\f{first}(\te{t_i}{\alpha_i}) = \{t_i\}$$
As the co-domain is a set, it allows only one instance of each element.

Let $\f{sumprob}$ be a function that takes a term and a term ensemble and
returns the sum of the probabilities associated to each instance of the term in
the ensemble:
$$\f{sumprob}(s,\te{t_i}{\alpha_i}) = \sum\limits_{j\in\{i|t_i = s\}} \alpha_j$$

Finally, let $\f{min}$ be a function that takes a term ensemble and returns a
term ensemble defined by 
$$\f{min}(\tau) = \bigcup\limits_{t\, \in\, \f{first}\ \tau}
\te{t}{\f{sumprob}(t, \tau)}$$

A term ensemble $\omega_1$ is thus said to be \emph{equivalent} to a term
ensemble $\omega_2$, $\omega_1\equiv\omega_2$, \emph{iff} $\f{min}(\omega_1) =
\f{min}(\omega_2).$
\end{definition}

Note that the definition of $\f{min}$ is correct, as $\sum\limits_{t\, \in\,
\f{first}\ \tau}\f{sumprob}(t,\tau)$ trivially sums to $1$.
\medskip

A deterministic transition rule between term ensembles can also be defined:
\begin{definition}\label{def:transition}
 If $X$ is a probabilistic rewrite system over terms, let $Det(X)$ be the
deterministic rewrite system over term ensembles written $\tte{X}$ and defined
as
$$\te{t_i}{\alpha_i}\tte{X}\te{t'_{i_j}}{\alpha_{i}\gamma_{i_j}} \textrm{
\emph{iff}, for each }i,\
t_i\tps{X}{\gamma_{i_j}}t'_{i_j}\wedge\sum\limits_{j}\gamma_{i_j}=1.$$
where all the reductions between single terms are produced by
following any rule in $X$, or none.
\end{definition}

\begin{lemma}
 Given a probabilistic rewrite system $P$, then $Det(P)$ preserves ensembles.
\end{lemma}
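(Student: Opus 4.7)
The plan is to show directly that the total probability mass is preserved under a single step of $Det(P)$. Starting from an ensemble $\omega = \te{t_i}{\alpha_i}$ with $\sum_i \alpha_i = 1$, I would take an arbitrary $\omega'$ with $\omega \tte{P} \omega'$, unfold Definition \ref{def:transition} to describe $\omega'$ explicitly, and then verify that the probabilities in $\omega'$ again sum to $1$, which is exactly what it means to be an ensemble per Definition \ref{def:termensable}.

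Concretely, by Definition \ref{def:transition} we have $\omega' = \te{t'_{i_j}}{\alpha_i \gamma_{i_j}}$, where for each index $i$ the single-term reductions $t_i \tps{P}{\gamma_{i_j}} t'_{i_j}$ satisfy $\sum_j \gamma_{i_j} = 1$ (with the convention that ``no reduction'' corresponds to a single $\gamma_{i_j} = 1$ sending $t_i$ to itself). The computation is then a one-line application of Fubini for finite sums:
\[
\sum_{i,j} \alpha_i \gamma_{i_j} \;=\; \sum_i \alpha_i \Bigl(\sum_j \gamma_{i_j}\Bigr) \;=\; \sum_i \alpha_i \cdot 1 \;=\; 1,
\]
using the hypothesis on $\omega$ in the last equality. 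Hence $\omega'$ is again an ensemble, and since $\omega$ and $\omega'$ were arbitrary, $Det(P)$ preserves ensembles.

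There is no real obstacle here; the only subtle point is bookkeeping, namely making sure that the definition of $\tte{P}$ is read as allowing the ``none'' case (so that every $t_i$ contributes at least one child with total weight $1$), and that identifications of repeated terms in the ensemble do not affect the sum. The latter follows from the fact that $\f{min}$ from Definition \ref{def:simpensemble} only regroups terms by summing their weights, an operation which leaves the total probability invariant. Thus the argument is robust to whether we view the resulting term ensemble as a multiset of labelled terms or as its $\equiv$-equivalent compact form.
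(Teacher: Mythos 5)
Your proof is correct and follows exactly the paper's own argument: unfold Definition \ref{def:transition} and compute $\sum_{i,j}\alpha_i\gamma_{i_j} = \sum_i \alpha_i \sum_j \gamma_{i_j} = \sum_i \alpha_i = 1$. The extra remarks about the ``none'' case and the invariance of the total mass under $\f{min}$ are harmless additions but not needed beyond what the paper states.
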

\begin{proof}
Let $\te{t_i}{\alpha_i}$ and $\te{t'_{i_j}}{\alpha_i\gamma_{i_j}}$ be term
ensembles such that
$\te{t_i}{\alpha_i}\tte{P}\te{t'_{i_j}}{\alpha_i\gamma_{i_j}}$. Then, by
definition \ref{def:transition}, $\forall i\ \sum\limits_{j}\gamma_{i_j}=1.$

Hence, $\sum\limits_{i,j}\alpha_i\gamma_{i_j} = \sum\limits_i \alpha_i
\sum\limits_j \gamma_{i_j} = \sum\limits_i \alpha_i = 1$.
\end{proof}
\bigskip

Using these concepts, (strong) confluence for a probabilistic rewrite system can
be expressed as show in 
definition \ref{def:confluence}.
\begin{definition}\label{def:confluence}
Let $R$ be a probabilistic rewrite system. $R$ is said to be confluent if, for
each term ensemble $\tau$ such that $\tau\tte{R}_*\mu \wedge \tau\tte{R}_*\nu$,
there exist equivalent term ensembles $\omega_1$ and $\omega_2$ such that
$\mu\tte{R}_*\omega_1 \wedge \nu\tte{R}_*\omega_2$. $R$ is said to be
\emph{strongly} confluent if, for each term ensemble $\tau$, such that
$\tau\tte{R}\mu \wedge \tau\tte{R}\nu$, there exist equivalent term ensembles
$\omega_1$ and $\omega_2$ such that $\mu\tte{R}\omega_1 \wedge
\nu\tte{R}\omega_2$.
\end{definition}
Note that strong confluence of $R$ implies the confluence of $R$, and also that
the confluence of $R$ implies the
strong confluence of $R^*$.
It is possible to extend the Hindley-Rosen lemma \cite{HRlemma}\cite{HRlemma2}
to these notions of confluence, as follows:
\begin{proposition}\label{prop:confluenceunion}
Let $R$ and $U$ be strongly confluent probabilistic rewrite systems. If $R$ and
$U$ strongly commute, that is if for each term ensemble $\tau$ such that
$\tau\tte{R}\mu\ \wedge\ \tau\tte{U}\nu$, there exist equivalent term ensembles
$\omega_1$ and $\omega_2$ such that $\mu\tte{U}\omega_1\ \wedge\
\nu\tte{R}\omega_2$, therefore $R\cup U$ is strongly confluent.
\end{proposition}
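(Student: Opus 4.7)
The plan is to mimic the classical Hindley-Rosen tiling argument at the level of term ensembles, proceeding one component at a time. Start from $\tau = \te{t_i}{\alpha_i}$ with single steps $\tau \tte{R \cup U} \mu$ and $\tau \tte{R \cup U} \nu$. By Definition \ref{def:transition}, each $t_i$ is reduced independently in each direction, either by a single rule of $R$, by a single rule of $U$, or by no rule at all. Writing $\tau_i$ for the singleton ensemble $\te{t_i}{1}$, let $\tau_i \tte{R \cup U} \mu_i$ and $\tau_i \tte{R \cup U} \nu_i$ be the corresponding ``slices'' of the two steps, so that $\mu$ and $\nu$ are recovered from the $\mu_i$ and $\nu_i$ by scaling each sub-ensemble by $\alpha_i$ and taking a disjoint union over $i$.

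For each $i$ I would close the local diamond from $\tau_i$ to $\mu_i$ and $\nu_i$ by choosing the appropriate tool: strong confluence of $R$ if both slice-steps use $R$-rules, strong confluence of $U$ if both use $U$-rules, strong commutation if one uses $R$ and the other $U$, and a trivial closure if either slice-step applies no rule (the non-trivial slice already closes the diagram against the identity). In every case one obtains equivalent ensembles $\omega_{1,i} \equiv \omega_{2,i}$ with $\mu_i \tte{R \cup U} \omega_{1,i}$ and $\nu_i \tte{R \cup U} \omega_{2,i}$ in a single step. Then define $\omega_1$ and $\omega_2$ by scaling the $\omega_{1,i}$ and $\omega_{2,i}$ by $\alpha_i$ and reassembling by disjoint union over $i$. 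Since Definition \ref{def:transition} allows distinct components of an ensemble to follow distinct rules, the reassembled transitions $\mu \tte{R \cup U} \omega_1$ and $\nu \tte{R \cup U} \omega_2$ remain valid single steps of $Det(R \cup U)$.

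The main obstacle is showing $\omega_1 \equiv \omega_2$ after reassembly. By Definition \ref{def:simpensemble} this reduces to checking $\f{sumprob}(t, \omega_1) = \f{sumprob}(t, \omega_2)$ for every term $t$. Since $\f{sumprob}$ is linear in its ensemble argument, this follows from $\sum_i \alpha_i \f{sumprob}(t, \omega_{1,i}) = \sum_i \alpha_i \f{sumprob}(t, \omega_{2,i})$, which is immediate from the componentwise equivalences $\omega_{1,i} \equiv \omega_{2,i}$. The remaining work is routine bookkeeping of the probability weights $\alpha_i \gamma_{i_j}$ through the rescalings and unions, together with verifying that the ``or none'' option in Definition \ref{def:transition} glues consistently with the non-trivial slices.
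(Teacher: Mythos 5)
Your argument is correct. Note, however, that the paper itself offers no proof of Proposition \ref{prop:confluenceunion} at all --- it is stated bare, as an extension of the Hindley--Rosen lemma --- so there is nothing to compare against line by line; what can be said is that your proof is exactly the one the surrounding machinery is designed to support. Your slicing into singletons $\te{t_i}{1}$ and the reassembly with the weights $\alpha_i$, including the final linearity-of-$\f{sumprob}$ argument for $\omega_1\equiv\omega_2$, is a verbatim re-derivation of Theorem \ref{pablo} (instantiated with $S=T=R\cup U$); invoking that theorem directly would reduce your proof to its genuinely new content, namely the four-way case split on a singleton source (both slice-steps in $R$: use strong confluence of $R$; both in $U$: use strong confluence of $U$; mixed: use strong commutation; one slice idle: close trivially against the identity, which Definition \ref{def:transition} admits via the ``or none'' clause). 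That case split is sound because a single probabilistic step of $R\cup U$ on a single term is the application of one rule, hence entirely an $R$-step or entirely a $U$-step, and because every closing step produced in each case is a single $Det(R)$- or $Det(U)$-step and therefore a single $Det(R\cup U)$-step, which is what strong confluence of the union requires. In short: the proof is right, and the only stylistic criticism is that it re-proves a lemma the paper already provides.
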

\medskip

Theorem \ref{pablo} allows the remaining proofs to be simplified, by showing
that it is enough to prove strong confluence (commutation) for a single-term
term ensemble.

\begin{theorem}\label{pablo}
Let $S$ and $T$ be probabilistic rewrite systems such that:
$$\forall t\ \left. \begin{array}{l}
\te{t}{1}\tte{S}\mu_1\\
\te{t}{1}\tte{T}\nu_1
\end{array}\right\rbrace \Rightarrow \exists\ \omega_1\equiv\omega_2\textrm{
s.t. }\left\lbrace \begin{array}{l}
\mu_1\tte{T}\omega_1\\
\nu_1\tte{S}\omega_2\\
\end{array}\right.$$

Then $\forall \tau$, $\mu$ and $\nu$ such that $\tau\tte{S}\mu$ and
$\tau\tte{T}\nu$, there exist equivalent $\omega_1$ and $\omega_2$ such that
$\mu\tte{T}\omega_1$ and $\nu\tte{S}\omega_2$.
\end{theorem}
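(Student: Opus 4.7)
My plan is to decompose the ensemble transitions componentwise and then apply the single-term hypothesis to each component, reassembling the local diamonds into a global one.

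Write $\tau = \te{t_i}{\alpha_i}$. By Definition \ref{def:transition}, $\tau \tte{S} \mu$ unpacks as a family of single-term reductions $t_i \tps{S}{\gamma_{ij}} s_{ij}$ with $\sum_j \gamma_{ij} = 1$ for each $i$, yielding $\mu = \te{s_{ij}}{\alpha_i \gamma_{ij}}$. Similarly $\tau \tte{T} \nu$ unpacks as $t_i \tps{T}{\delta_{ik}} r_{ik}$ with $\sum_k \delta_{ik} = 1$, yielding $\nu = \te{r_{ik}}{\alpha_i \delta_{ik}}$. Reading each $i$'s reductions in isolation gives, for every $i$, the ensemble transitions $\te{t_i}{1} \tte{S} \te{s_{ij}}{\gamma_{ij}}$ (as $j$ varies) and $\te{t_i}{1} \tte{T} \te{r_{ik}}{\delta_{ik}}$ (as $k$ varies).

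Now I would apply the hypothesis to each $t_i$, obtaining equivalent ensembles $\omega_1^{(i)} \equiv \omega_2^{(i)}$ with $\te{s_{ij}}{\gamma_{ij}} \tte{T} \omega_1^{(i)}$ and $\te{r_{ik}}{\delta_{ik}} \tte{S} \omega_2^{(i)}$. I would then define $\omega_1$ (resp.\ $\omega_2$) by scaling every probability in $\omega_1^{(i)}$ (resp.\ $\omega_2^{(i)}$) by $\alpha_i$ and taking the multiset union over $i$; since each $\omega_\ast^{(i)}$ has total probability $1$ and $\sum_i \alpha_i = 1$, the result is a well-formed term ensemble. Three things then remain: (a) $\mu \tte{T} \omega_1$, (b) $\nu \tte{S} \omega_2$, and (c) $\omega_1 \equiv \omega_2$. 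For (a) and (b), unpacking the closing reductions (say $s_{ij} \tps{T}{\eta_{ijl}} v_{ijl}$ with $\sum_l \eta_{ijl}=1$ and $\omega_1^{(i)} = \te{v_{ijl}}{\gamma_{ij}\eta_{ijl}}$) and splicing them back yields exactly a transition of $\mu$ in the sense of Definition \ref{def:transition}, because the probability weights multiply cleanly to $\alpha_i \gamma_{ij} \eta_{ijl}$. For (c), a direct computation using Definition \ref{def:simpensemble} gives, for any term $t$,
\[
\f{sumprob}(t,\omega_1) = \sum_i \alpha_i\, \f{sumprob}(t,\omega_1^{(i)}) = \sum_i \alpha_i\, \f{sumprob}(t,\omega_2^{(i)}) = \f{sumprob}(t,\omega_2),
\]
whence $\f{min}(\omega_1) = \f{min}(\omega_2)$ and $\omega_1 \equiv \omega_2$.

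The main obstacle is notational rather than conceptual: the argument juggles four layers of indices (the ambient index $i$, the $S$- and $T$-branchings $j$ and $k$, and the branchings $l$ produced by the hypothesis), and one must check that every probability product lines up with what Definition \ref{def:transition} demands. A subtler point is that this definition permits each component reduction to be a no-op $t_i \tps{S}{1} t_i$; this is automatically absorbed because the hypothesis quantifies over all $t$ and includes the trivial one-step transition. Once the bookkeeping is in place, the proof reduces to the disjoint-union computation sketched above, with no further obstruction.
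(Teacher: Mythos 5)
Your proposal is correct and follows essentially the same route as the paper's proof: decompose $\tau\tte{S}\mu$ and $\tau\tte{T}\nu$ into single-term ensembles $\te{t_i}{1}$, apply the hypothesis componentwise to obtain equivalent $\omega_1^{(i)}\equiv\omega_2^{(i)}$, and reassemble by scaling with $\alpha_i$ and taking the union. Your explicit $\f{sumprob}$ computation for the final equivalence is a welcome elaboration of a step the paper dismisses as trivial, but it is the same argument.
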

\begin{proof}
Let $\tau=\te{t_i}{\alpha_i}$, $\mu=\te{u_{i_j}}{\alpha_i\delta_{i_j}}$ and
$\nu=\te{v_{i_k}}{\alpha_i\varphi_{i_k}}$ such that $\tau\tte{S}\mu$ and
$\tau\tte{T}\nu$, \emph{i.e.} for each $i$:
\begin{equation}\label{thm:pablo-transition}
t_i\tps{S}{\delta_{i_j}}u_{i_j}\ \wedge\ \sum\limits_j\delta_{i_j}=1 \textrm{
and }
t_i\tps{T}{\varphi_{i_k}}v_{i_k}\ \wedge\ \sum\limits_k\varphi_{i_k}=1
\end{equation}

Consider the single term term-ensembles $\tau_i=\te{t_i}{1}$, and the term
ensembles $\mu_i=\te{u_{i_j}}{\delta_{i_j}}$ and
$\nu=\te{v_{i_k}}{\varphi_{i_k}}$. By equation (\ref{thm:pablo-transition}), for
each $i$, $\tau_i\tte{S}\mu_i$ and $\tau_i\tte{T}\nu_i$. By our hypothesis, for
each $i$ there exist equivalent term ensembles
$\omega_1^i=\te{w_{1j_l}^i}{\delta_{i_j}\sigma_{j_l}^i}$ and
$\omega_2^i=\te{w_{2k_s}^i}{\varphi_{i_k}\gamma_{k_s}^i}$ such that
$\mu_i\tte{T}\omega_1^i$ and $\nu_i\tte{S}\omega_2^i$.

By taking $\omega_1=\te{w_{1j_l}^i}{\alpha_i\delta_{i_j}\sigma_{j_l}^i}$ and
$\omega_2=\te{w_{2k_s}^i}{\alpha_i\varphi_{i_k}\gamma_{k_s}^i}$, it follows that
$\mu\tte{T}\omega_1$ and $\nu\tte{S}\omega_2$. As $\forall i$
$\omega_1^i\equiv\omega_2^i$, it is trivially the case that
$\omega_1\equiv\omega_2$.
\end{proof}
\bigskip

Lemma \ref{lemacorto} guarantees that equivalence between term ensembles is a
congruence by adding identical context to each term in both of the ensembles:

\begin{lemma}\label{lemacorto}
 Given two equivalent term ensembles $\omega_1=\te{t_i}{\alpha_i}$ and
$\omega_2=\te{s_j}{\gamma_j}$ and any context $C$, the term ensembles
$\tau_1=\te{C[t_i/x]}{\alpha_i}$ and $\tau_2=\te{C[s_j/x]}{\gamma_j}$ are also
equivalent.
\end{lemma}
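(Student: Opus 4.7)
The plan is to unfold the definition of equivalence (Definition~\ref{def:simpensemble}) and reduce the statement to a simple reindexing of sums. Recall that $\omega_1 \equiv \omega_2$ means $\f{min}(\omega_1) = \f{min}(\omega_2)$, which is equivalent to the two conditions: $\f{first}(\omega_1) = \f{first}(\omega_2)$, and for every term $t$, $\f{sumprob}(t,\omega_1) = \f{sumprob}(t,\omega_2)$ (this second condition actually subsumes the first, since a term with zero probability does not appear in the ensemble). I would first make this observation explicit, so the remainder of the proof simply has to establish the two analogous conditions for $\tau_1$ and $\tau_2$.

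Next, I would handle the $\f{first}$ sets. A term $u$ belongs to $\f{first}(\tau_1)$ iff $u = C[t_i/x]$ for some $i$, that is, iff $u \in \{C[t/x] \mid t \in \f{first}(\omega_1)\}$. By the assumption $\f{first}(\omega_1) = \f{first}(\omega_2)$, this set coincides with $\{C[s/x] \mid s \in \f{first}(\omega_2)\} = \f{first}(\tau_2)$.

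For the probabilities, for each term $u$ define $T_u = \{t \mid C[t/x] = u\}$ (modulo $\alpha$-equivalence). A straightforward reindexing gives
$$\f{sumprob}(u,\tau_1) = \sum_{i\,:\,C[t_i/x]=u} \alpha_i = \sum_{t \in T_u} \f{sumprob}(t,\omega_1),$$
and similarly $\f{sumprob}(u,\tau_2) = \sum_{t \in T_u} \f{sumprob}(t,\omega_2)$. Because $\omega_1 \equiv \omega_2$ implies $\f{sumprob}(t,\omega_1) = \f{sumprob}(t,\omega_2)$ for every $t$, the two sums coincide term-by-term over $T_u$. Combining this with the equality of $\f{first}$ sets yields $\f{min}(\tau_1) = \f{min}(\tau_2)$, hence $\tau_1 \equiv \tau_2$.

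The only mildly subtle point is that the map $t \mapsto C[t/x]$ need not be injective: in the extreme case where $x$ does not occur in $C$ at all, every $t$ is sent to the single term $C$, so $T_C$ is the set of all terms. This is precisely why the reindexing must be phrased via the sets $T_u$ rather than by matching indices $i \leftrightarrow j$ one-to-one; once it is, both cases (injective and collapsing) are treated uniformly, and no further obstacle arises.
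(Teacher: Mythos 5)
Your proof is correct and follows essentially the same route as the paper's: both unfold $\f{min}$ and push the context through the ensemble, the paper doing so via a chain of equalities that passes through the common minimal form $\te{w_k}{\delta_k}$ (wrapping the intermediate union in an extra $\f{min}$ to absorb collisions), while you verify the two defining conditions of $\f{min}$ directly by reindexing over the fibers $T_u$. If anything, your treatment of the non-injectivity of $t\mapsto C[t/x]$ is more explicit than the paper's, which handles it only implicitly through that extra application of $\f{min}$.
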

\begin{proof}
 $\omega_1\equiv\omega_2\Rightarrow\f{min}(\omega_1)=\f{min}(\omega_2)$, defined
as equal to $\te{w_k}{\delta_k}$, then

\begin{eqnarray*}
\f{min}(\tau_1) & = & \bigcup\limits_{t\,\in\,\f{first}(\tau_1)}
\te{t}{\f{sumprob}(t,\tau_1)} \\
 &=& \f{min}\left(\bigcup\limits_{t\,\in\,\f{first}(\omega_1)}
\te{C[t/x]}{\f{sumprob}(t,\omega_1)}\right) \\ 
 &=& \f{min}\left(\te{C[w_k/x]}{\delta_k}\right) \\
 &=& \f{min}\left(\bigcup\limits_{t\,\in\,\f{first}(\omega_2)}
\te{C[t/x]}{\f{sumprob}(t,\omega_2)}\right) \\
 &=& \bigcup\limits_{t\,\in\,\f{first}(\tau_2)} \te{t}{\f{sumprob}(t,\tau_2)} \\
 &=& \f{min}(\tau_2)
\end{eqnarray*}
and hence $\tau_1 \equiv \tau_2$.
\end{proof}

\subsection{Strong confluence for \texorpdfstring{$\{(\meas), \rif{0},
\rif{1}\}$}{M, IF-0, IF-1}}
The strong confluence of the added rules is formally expressed and proved by
theorem \ref{omegas}.

\begin{theorem}\label{omegas}
The probabilistic reduction rules system $T=\{(\meas), \rif{0}, \rif{1}\}$ is
strongly confluent.
\end{theorem}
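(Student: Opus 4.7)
The plan is to apply Theorem \ref{pablo} with $S = T$, reducing the problem to the single-term case: it suffices to show that whenever $\te{t}{1}\tte{T}\mu_1$ and $\te{t}{1}\tte{T}\nu_1$, there exist equivalent $\omega_1, \omega_2$ with $\mu_1\tte{T}\omega_1$ and $\nu_1\tte{T}\omega_2$. Each such step corresponds to picking a redex position in $t$ and firing the unique rule of $T$ applicable there (a probabilistic branching for $(\meas)$, a deterministic step for $\rif{0}, \rif{1}$). I would then proceed by case analysis on the two chosen positions $p_1, p_2$.

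The main structural observation is that the rules of $T$ fire only when their principal argument has the shape of a qubit-constant $\sum_u \alpha_u q^{(u)}$ (for $\meas$), or of the specific constants $!\ket{0}, !\ket{1}$ (for the IF rules). By the syntax and well-formedness rules of Figures \ref{fig:syntax} and \ref{fig:well-formed}, these shapes admit no subterm that itself supports any rule in $T$. Consequently two distinct redex positions can be non-disjoint only when one lies in a \textbf{then} or \textbf{else} branch of an IF-redex located at the other. In the disjoint case, both reductions commute on the nose: firing $r_2$ in each component of $\mu_1$ (and $r_1$ in each component of $\nu_1$) yields the same ensemble, whose probabilities factorise as products and whose terms agree position-by-position, with Lemma \ref{lemacorto} taking care of the common surrounding context.

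The delicate case is when $r_2$ lies in the \emph{dead} branch of an IF-redex $r_1$ and $r_2$ is itself a measurement. Firing $r_1$ alone collapses $t$ to its taken branch with probability $1$, discarding $r_2$ entirely. Firing $r_2$ alone, however, produces a probabilistic ensemble of terms, each still containing the IF-redex $r_1$ with a distinct residue of the measurement substituted in its dead branch; firing $r_1$ in every component then collapses each to the same taken subterm, leaving an ensemble of copies of this subterm whose probabilities sum to $1$. The two sides are not syntactically identical but are reconciled once $\f{min}$ merges the repeated copies: this is precisely where the equivalence $\equiv$ of Definition \ref{def:simpensemble} is indispensable, and this is the main obstacle I expect. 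The remaining subcases---$r_1 = r_2$, and $r_2$ inside the \emph{live} branch of $r_1$---are immediate, since both reduction orders yield identical ensembles without needing the equivalence.
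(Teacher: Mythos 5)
Your proposal is correct and follows essentially the same route as the paper: reduce to single-term ensembles via Theorem \ref{pablo}, handle reductions in disjoint or nested positions by commuting them under a common context (Lemma \ref{lemacorto}), and resolve the one genuine critical pair---an \textbf{if} collapsing a branch that contains a measurement redex---by observing that the resulting multiset of identical copies with probabilities summing to $1$ is reconciled by $\f{min}$ and the equivalence of Definition \ref{def:simpensemble}. The paper organises the cases as a structural induction on $t$ rather than on redex positions, but the case split and the key observations (including where $\equiv$ is indispensable, the paper's case for $\mu=\te{t_2}{1}$ against reduction of the discarded branch $t_3$) coincide with yours.
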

\begin{proof}
Given term ensembles $\tau=\te{t}{1}$, $\mu$ and $\nu$, where $\mu\neq\nu$, and
such that $\tau\tte{T}\mu$ and $\tau\tte{T}\nu$, then by proving there exist
equivalent term ensembles $\omega_1$ and $\omega_2$ such that
$\mu\tte{T}\omega_1$ and $\nu\tte{T}\omega_2$, theorem \ref{pablo} shows that
this system is strongly confluent.

 This result is proved here using structural induction over $t$.
\begin{enumerate}
 \item $t=x\ |\ c_U\ |\ q\ |\ M_I\ |\ !t'\ \Rightarrow\ \nexists\ \mu\neq\nu$.
Note that there is no rule in $T$ that can reduce $t$ in this case, and hence
only $\id$ is applicable, producing $\mu=\tau$. Therefore there cannot exist any
$\nu\neq\mu$.
 \item\label{trivial} $\nu=\tau$. Hence $\omega_1=\omega_2=\mu$.
 \item\label{abstraccion} $t=\lambda x.t'$.

\noindent Let $\mu=\te{\lambda x.u_i}{\alpha_i}$ where $(t'\tps{T}{\alpha_i}
u_i)$, 
with $\sum\limits_{i}{\alpha_i}=1$, and let $\nu=\te{\lambda x.v_j}{\gamma_j}$
where $(t'\tps{T}{\gamma_j} v_j)$ with $\sum\limits_{j}{\gamma_j}=1$.

\noindent By induction, there exist equivalent term ensembles
$\omega'_1=\te{w^1_{i_k}}{\alpha_i\beta_{i_k}}$ and
$\omega'_2=\te{w^2_{j_h}}{\gamma_j\sigma_{j_h}}$ such that
$u_i\tps{T}{\beta_{i_k}}w^1_{i_k}$ and $v_j\tps{T}{\sigma_{j_h}}w^2_{j_h}$.

\noindent Hence $\omega_1 = \te{\lambda x.w^1_{i_k}}{\alpha_i\beta_{i_k}}$ and
$\omega_2 = \te{\lambda x.w^2_{j_h}}{\gamma_j\sigma_{j_h}}$ can be taken, which
are equivalent by lemma \ref{lemacorto}.
 
 \item $t=\lambda !x.t'$, analogous to case (\ref{abstraccion}).

  \item\label{aplic} $t=(t_1\ t_2)$. Consider the following cases:
  \begin{enumerate}
     	\item\label{largo} Let $\mu = \te{(t_1\ u_i)}{\alpha_i}$ where
$t_2\tps{T}{\alpha_i}u_i$, with $\sum\limits_{i}{\alpha_i}=1$, and\\
        let $\nu = \te{(t_1\ v_j)}{\gamma_j}$ where $t_2\tps{T}{\gamma_j} v_j$,
with $\sum\limits_{j}{\gamma_j}=1$.\\
     This case is analogous to case (\ref{abstraccion}), as lemma
\ref{lemacorto} is applicable.

	\item Let $\mu = \te{(u_i\ t_2)}{\alpha_i}$ and $\nu = \te{(v_j\
t_2)}{\gamma_j}$. This follows case (\ref{largo}).

	\item\label{mixed} Let $\mu = \te{(u_i\ t_2)}{\alpha_i}$ and $\nu =
\te{(t_1\ v_j)}{\gamma_j}$, then take $\omega_1=\omega_2=\te{(u_i\
v_j)}{\alpha_i\gamma_j}$

	\item Let $t = (M_I\ q)$, $\mu = \te{q_i}{\alpha_i}$ where $(M_I\
q)\tps{T}{\alpha_j} q_j$, with $\sum\limits_{i}{\alpha_i}=1$. This follows case
(\ref{trivial}). 
$\omega_1=\omega_2=\mu$.
  \end{enumerate}
  \item $t=\cif t_1 \cthen t_2 \celse t_3$. Consider the following cases:
	\begin{enumerate}
	\item Let $\mu=\te{\cif t_1\cthen t_2\celse u_i}{\alpha_i}$ where
$t_3\tps{T}{\alpha_i}u_i$, with $\sum\limits_{i}{\alpha_i}=1$ and\\
	let $\nu=\te{\cif t_1\cthen v_j\celse t_3}{\gamma_j}$ where
$t_2\tps{T}{\gamma_j}v_j$, with $\sum\limits_{j}{\gamma_j}=1$. \\
        This is analogous to (\ref{aplic}.\ref{mixed}). In fact, any combination
that implies that $\mu$ and $\nu$ are obtained by the reduction of $t_1$, $t_2$
or $t_3$, is analogous to one of the subcases of case (\ref{aplic}).
	\item Let $\mu=\te{t_2}{1}$, and let $\nu=\te{\cif t_1\cthen t_2\celse
v_j}{\gamma_j}$ where $t_3\tps{T}{\gamma_j}v_j$, with
$\sum\limits_{j}{\gamma_j}=1$. Then take $\omega_1=\omega_2=\mu$. (Analogous if
$t_2\tps{T}{\gamma_j}v_j$ and $\mu=\te{t_3}{1}$).
	\item Let $\mu=\te{t_2}{1}$, and let $\nu=\te{\cif t_1\cthen v_j\celse
t_3}{\gamma_j}$ where $t_2\tps{T}{\gamma_j}v_j$, with
$\sum\limits_{j}{\gamma_j}=1$. Then take $\omega_1=\omega_2=\te{v_j}{\gamma_j}$.
(Analogous for $t_3$).
	\end{enumerate}
\end{enumerate}
\end{proof}

\subsection{Preserving confluence} 

Before formalising the confluence for the whole calculus, some key examples are
presented:
\begin{itemize}
\item \emph{Cloning} arguments:
$(\lambda x.(x\ x))\ (M_{\{1\}}\
(\frac{1}{\sqrt{2}}!\ket{0}+\frac{1}{\sqrt{2}}!\ket{1}))$

The problem here is that if copying a measurement is allowed, this
may give different results for each measurement. However, by measuring first and
then applying
the abstraction, both measurements are the same.
In $\lambda_q$, these kinds of terms are disallowed by the well-formedness
rules \cite{qlambda}; a linear argument can appear only once in the body of a
function.

\item \emph{Copying} arguments:
$(\lambda!x.(x\ x))\ (M_{\{1\}}\
(\frac{1}{\sqrt{2}}!\ket{0}+\frac{1}{\sqrt{2}}!\ket{1}))$

When the argument is linear, there is no rule in the operational semantics of
$\lambda_q$
that allows the application of a non-linear abstraction to a linear term. Hence,
$M_{\{1\}}$ must apply first,
producing a non-linear output (either $!\ket{0}$ or $!\ket{1}$).

\item \emph{Promoting} arguments:
$(\lambda!x.(x\ x))\ !(M_{\{1\}}\
(\frac{1}{\sqrt{2}}!\ket{0}+\frac{1}{\sqrt{2}}!\ket{1}))$

In this case copying the measurement operation twice is allowed, and this is the
only
applicable reduction strategy because $!t$ terms are values in $\lambda_q$.
\end{itemize}
\bigskip

In light of the above statements, a formal proof of confluence for the entire
system is required.

Lemma \ref{sincontexto} ensures that, under some hypotheses, measurement is
independent of context:

\begin{lemma}\label{sincontexto}
 Let $x$ be a variable and let $t$ be a linear term with only one linear
instance of $x$. If $m\tps{(M)}{p}v$, then $t[m/x]\tps{(M)}{p}t[v/x]$.
\end{lemma}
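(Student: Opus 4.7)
The plan is to proceed by structural induction on $t$, using linearity to localize $x$ to a unique subterm in each compound case and then lift the one-step $(M)$-reduction through the surrounding constructor.

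For the base case $t = x$, the substitutions give $t[m/x] = m$ and $t[v/x] = v$, so the claim is exactly the hypothesis $m \tps{(M)}{p} v$. The cases $t = y$ with $y \neq x$ and $t$ a constant ($c_U$, $q$, $M_I$) do not arise, because $t$ must contain the linear variable $x$. The case $t = !t'$ is also excluded: since $x$ is linear, it cannot occur free under a $!$, so $t'$ does not contain $x$, and again this case is vacuous.

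For the inductive cases, consider $t = \lambda y.t'$ (with $y \neq x$, by $\alpha$-conversion if necessary). Since the unique linear occurrence of $x$ lies inside $t'$, we have $t[m/x] = \lambda y.(t'[m/x])$ and $t[v/x] = \lambda y.(t'[v/x])$. By the induction hypothesis, $t'[m/x] \tps{(M)}{p} t'[v/x]$, and this reduction lifts through the abstraction, giving the required step. The case $t = \lambda!y.t'$ is identical in form. For $t = (t_1\ t_2)$, linearity forces $x$ to occur in exactly one of $t_1$ or $t_2$; applying the induction hypothesis to that subterm and rebuilding the application yields the result. The same argument handles $t = \cif t_1 \cthen t_2 \celse t_3$, where $x$ occurs in exactly one of the three subterms.

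The key technical point, and the only place where care is needed, is that a single application of rule $(M)$ fires at the position of $m$ inside the substituted term regardless of the surrounding context, with the same probability $p$: the pattern $(M_I\ q)$ and its probability $p_w$ depend only on $m$ and not on the enclosing term. Linearity guarantees there is no interference from a second occurrence of $x$ (which would raise the question of whether both copies are reduced, possibly with distinct outcomes). Once this is settled, the induction goes through mechanically, and the result is an immediate consequence of congruence of $(M)$ with the term constructors permitted by the well-formedness rules.
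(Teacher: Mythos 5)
Your proof is correct and follows essentially the same route as the paper: structural induction on $t$, using linearity to locate the unique occurrence of $x$ in one subterm and lifting the single $(M)$-step through the enclosing constructor with unchanged probability. The only cosmetic difference is that you discharge the $x\notin\mathcal{V}(t)$ and $!t'$ cases as vacuous under the hypothesis, whereas the paper lists the former as a trivial case.
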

\begin{proof}
Structural induction over $t$
\begin{enumerate}
 \item Let $t$ be such that $x\notin \mathcal{V}(t)$ $\Rightarrow\
t[m/x]=t=t[v/x]$.
 \item Let $t=x$. $x[m/x] = m \tps{(M)}{p}v = x[v/x]$.
 \item\label{lambda} Let $t=\lambda y.t'$. By induction $\lambda y.t'[m/x]
\tps{(M)}{p}\lambda y.t'[v/x]$.
 \item Let $t=\lambda !y.t'$. Analogous to case (\ref{lambda}).
 \item\label{app} Let $t=(t_1\ t_2)$, with $x \in \mathcal{V}(t_1)$. Then $(t_1\
t_2)[m/x]$ = $(t_1[m/x]\ t_2)$ and by induction, $(t_1[m/x]\
t_2)\tps{(M)}{p}(t_1[v/x]\ t_2)$, which is equal to $(t_1\ t_2)[v/x]$.
 \item Let $t=(t_1\ t_2)$, with $x \in \mathcal{V}(t_2)$. Analogous to case
(\ref{app}).
 \item Let $t=\cif t_1\cthen t_2\celse t_3$. Analogous to case (\ref{app}).
\end{enumerate}
\end{proof}

Next, it is proved that the original reduction rules system from $\lambda_q$ and
the new rules for measurements strongly commute. This is suggestive of the
confluence of the whole system 

\begin{theorem}\label{thm2}
 The probabilistic reduction rules systems $S=\{(APP_1), (APP_2),$ $(\beta),
(!\beta_1), (!\beta_2), (U)\}$ and $T=\{(\meas), \rif{0}, \rif{1}\}$ strongly
commute.
\end{theorem}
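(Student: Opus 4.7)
The plan is to invoke Theorem \ref{pablo} to reduce the strong commutation statement to single-term ensembles $\tau = \te{t}{1}$, and then proceed by structural induction on $t$, along the same lines as the proof of Theorem \ref{omegas}. Thus I would fix $\mu$ and $\nu$ with $\te{t}{1} \tte{S} \mu$ and $\te{t}{1} \tte{T} \nu$, and aim to produce joinable ensembles $\omega_1 \equiv \omega_2$ with $\mu \tte{T} \omega_1$ and $\nu \tte{S} \omega_2$, arguing inductively for each syntactic shape of $t$.

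The easy cases are the leaves ($x$, $c_U$, $q$, $M_I$, $!t'$), where no $S$-redex and no $T$-redex sit at the head, so any reduction happens deeper and is handled by induction on the direct subterm, using Lemma \ref{lemacorto} to lift the resulting equivalences through the surrounding context. Abstractions $\lambda x.t'$ and $\lambda !x.t'$ reduce to the subterm case directly. For applications $(t_1\ t_2)$ and for $\cif t_1\cthen t_2\celse t_3$, whenever $S$ and $T$ act on disjoint subterms, I would dispatch the argument exactly as in cases (\ref{aplic}) and following of Theorem \ref{omegas}.

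The genuine critical pairs are those where an $S$-rule fires at the head of $t$. For $t = (\lambda x.t')\ u$ with $S$ applying $\beta$ and $T$ reducing inside $u$, I have $\mu = \te{t'[u/x]}{1}$ and $\nu = \te{(\lambda x.t')\ v_j}{\gamma_j}$ with $u \tps{T}{\gamma_j} v_j$. Since $x$ is linear it appears at most once in $t'$, so Lemma \ref{sincontexto} gives $t'[u/x] \tps{T}{\gamma_j} t'[v_j/x]$ (for the $(M)$ rule; the $\rif{0}$/$\rif{1}$ cases are deterministic and immediate), while $\nu$ reduces by $\beta$ to the same ensemble, yielding $\omega_1 = \omega_2 = \te{t'[v_j/x]}{\gamma_j}$. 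For $t = (\lambda !x.t')\ !v$ with $S$ applying $!\beta$, the argument $!v$ is a value, so no $T$-redex lies at its top; any $T$-reduction must sit in $t'$, and commutation follows by induction as in the abstraction case. For $t = c_U\ q$ with $S$ applying the $U$-rule, the argument $q$ is a qubit-constant and contains no $T$-redex, so no conflict arises. Finally, for $\cif t_1\cthen t_2\celse t_3$ with $S$ reducing one of the $t_i$ while $T$ reduces another, induction applies; and if $S$ reduces $t_1$ to a non-base-qubit while $T$ is the $\rif{0}$ or $\rif{1}$ rule (or vice versa), the two rules act at incomparable positions, so a standard closing diagram works.

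The main obstacle is the $\beta$-measurement interaction. The key subtlety is probabilistic bookkeeping: the ensemble $\nu$ carries the distribution $\{\gamma_j\}$ produced by measuring $u$, whereas $\mu$ is a singleton $\{1\}$ that only later gets its probabilities by measuring the substituted copy of $u$ inside $t'[u/x]$. Lemma \ref{sincontexto} is precisely what ensures these two distributions coincide, but it does so only because $x$ is linear, so that the single occurrence of $u$ inside $t'[u/x]$ measures with the same weights as $u$ measured in isolation. This is exactly the point underlined by the cloning and copying examples at the start of the subsection: the well-formedness rules of $\lambda_q$ that forbid linear duplication are what make the commutation diagram close cleanly.
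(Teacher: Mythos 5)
Your proposal is correct and follows essentially the same route as the paper's proof: reduction to single-term ensembles via Theorem \ref{pablo}, structural induction on $t$ with Lemma \ref{lemacorto} closing the congruence cases, and Lemma \ref{sincontexto} resolving the $\beta$--measurement critical pair, with the $!\beta$ and $(U)$ head-reductions dismissed exactly as in the paper by observing the argument carries no $T$-redex.
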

\begin{proof}
If it is proved that given term ensembles $\tau=\te{t}{1}$, $\mu$ and $\nu$,
$\mu\neq\nu$, such that $\tau\tte{S}\mu$ and $\tau\tte{T}\nu$, then this implies
that there exist equivalent term ensembles $\omega_1$ and $\omega_2$ such that
$\mu\tte{T}\omega_1$ and $\nu\tte{S}\omega_2$, then $S$ and $T$ verify the
hypotheses for theorem \ref{pablo}, which proves strong commutation between
them.

 This result is proved here using structural induction over $t$.
\begin{enumerate}
 \item $t=x\ |\ c_U\ |\ q\ |\ M_I\ |\ !t\ \Rightarrow\ \nexists\ \mu\neq\nu$.
Note that there is no rule in $T$ nor $S$ that can reduce $t$ in this case,
hence only $\id$ is applicable, producing $\mu=\tau$. Therefore there cannot
exists any $\nu\neq\mu$.
 \item\label{thm2:trivial} $\nu=\tau$. Hence $\omega_1=\omega_2=\mu$. (Analogous
for $\mu=\tau$).
 \item\label{thm2:lambda} $t=\lambda x.t'$, $\mu=\te{\lambda x.u}{1}$ and
$\nu=\te{\lambda x.v_j}{\gamma_j}$ such that $\tau\tte{S}\mu$ and
$\tau\tte{T}\nu$. By induction, there exist equivalent
$\omega'_1=\te{w^1_{s}}{\delta_s}$ and $\omega'_2=\te{w^2_{j}}{\sigma_j}$ such
that $\te{u}{1}\tte{T}\omega'_1$ and $\te{v_j}{\gamma_j}\tte{S}\omega'_2$.
Then take $\omega_1=\te{\lambda x.w^1_{s}}{\delta_s}$ and $\omega_2=\te{\lambda
x.w^2_{j}}{\sigma_j}$ which are equivalent by lemma \ref{lemacorto}.
 \item $t=\lambda!x.t'$. Analogous to case (\ref{thm2:lambda}).
 \item\label{thm2:app} $t=(t_1\ t_2)$. Consider the following cases:
 \begin{enumerate}
  \item\label{thm2:aplica} $\mu=\te{(t_1\ u)}{1}$ and $\nu=\te{(t_1\
v_j)}{\gamma_j}$. Analogous to case (\ref{thm2:lambda}); note that lemma
\ref{lemacorto} also holds in this case.
  \item\label{thm2:aplica2} $\mu=\te{(u\ t_2)}{1}$ and $\nu=\te{(v_j\
t_2)}{\gamma_j}$. Analogous to 
subcase (\ref{thm2:aplica}).
  \item\label{thm2:mixed} $\mu=\te{(u\ t_2)}{1}$ and $\nu=\te{(t_1\
v_j)}{\gamma_j}$.
	Take $\omega_1=\omega_2=\te{(u\, v_j)}{\gamma_j}$ (Similarly if
$t_2\flecha{1}u$ and $t_1\flecha{\gamma_j}v_j$).
  \item\label{aplic:gate} $t=(c_U\ q)$ and $\mu=\te{q'}{1}$, This follows case
(\ref{thm2:trivial}). Note that if instead of $t_2=q$ an expression like
$t_2=(M_I\ q)$ is given, it is subcase (\ref{thm2:aplica2}) which applies, where
$u=t_1=c_U$.
  \item $t=(M_I\ q)$ and $\mu=\tau$. This follows the analogous to case
(\ref{thm2:trivial}).
  \item $t_1=\lambda x.t'$, $\mu=\te{t'[t_2/x]}{1}$, $\nu=\te{(\lambda x.t'\
v_j)}{\gamma_j}$. By lemma \ref{sincontexto},
$\omega_1=\omega_2=\te{t'[v_j/x]}{\gamma_j}$ can be taken. Note that if
$t_1=\lambda!x.t'$, with the same $\mu$, then $t_2$ must be non-linear due to
the well-formedness rules and hence in this situation it is the subcase
(\ref{aplic:gate}).
  \item $t_1=\cif t'_1\cthen t'_2\celse t'_3$. Then $\mu$ has to be obtained by
the reduction of $t'_1, t'_2, t'_3$ or $t_2$, hence, it is analogous to previous
cases. Note that if, for instance, $\nu=\te{(t'_2\ t_2)}{1}$ and suppose that
$\mu$ is obtained by the reduction of $t'_3$ (it cannot be the application of
the$\cif$statement to $t_2$ because there is not any rule that performs such a
reduction) then $\omega_1=\omega_2=\nu$.
 \end{enumerate}
 \item $t=\cif t_1\cthen t_2\celse t_3$. Consider the following cases:
 \begin{enumerate}
  \item Let $\mu=\te{\cif t_1\cthen t_2\celse u}{1}$ where $t_3\tps{S}{1}u$ and
let $\nu=\te{\cif t_1\cthen v_j\celse t_3}{\gamma_j}$ where
$t_2\tps{T}{\gamma_j}v_j$ and $\sum\limits_{j}{\gamma_j}=1$. Analogous to
(\ref{thm2:app}.\ref{thm2:mixed}). In fact, any combinations that implies that
$\mu$ and $\nu$ are obtained by reduction of $t_1$, $t_2$, or $t_3$, is
analogous to one of the subcases of case (\ref{thm2:app}).
  \item Let $\mu=\te{\cif t_1\cthen t_2\celse u}{1}$ where $t_3\tps{S}{1}u$ and
$\nu=\te{t_2}{1}$, then take $\omega_1=\omega_2=\nu$. Analogous if
$t_2\tps{S}{1}u$ and $\nu=\te{t_3}{1}$.
  \item Let $\mu=\te{\cif t_1\cthen u\celse t_3}{1}$ where $t_2\tps{S}{1}u$ and
$\nu=\te{t_2}{1}$, then take $\omega_1=\omega_2=\te{u}{1}$. Similarly if
$t_3\tps{S}{1}u$ and $\nu=\te{t_3}{1}$.
 \end{enumerate}
\end{enumerate}
\end{proof}

It has been shown that $T$ and $S$ strongly commute, and hence $T^*$ and $S^*$
strongly commute. Moreover, $T$ is confluent, and hence $T^*$ is strongly
confluent.

Now, supposing $S$ is confluent, it follows that $S^*$ is strongly confluent.
Proposition \ref{prop:confluenceunion} entails that $S^*\cup T^*$ is strongly
confluent, and therefore that $S\cup T$ is confluent. Therefore, the extension
of van Tonder's calculus presented here preserves confluence.

\section{Conclusions}
This paper extends the quantum lambda calculus $\lambda_q$, defined by van
Tonder, with a family of measurement operations $M_I$, which measure the qubits
indicated by the set $I$, and an \emph{if} structure which allows reading of the
output of these measurements. By defining the notion of ensembles of terms, and
extending the rewrite system to a deterministic system between term ensembles, a
proof of confluence for this extended calculus is presented. The extended
calculus is therefore confluent, and retains the simplicity of van Tonder's
original calculus.

The proof of confluence follows a method which can be applied to other calculi
that make use of probabilistic transition rules. For example, this method could
be applied to both {\em Lineal} and to {\em QML}, and this is the subject of
ongoing research.

The addition of a measurement operation to $\lambda_q$, which preserves
confluence, is a significant development. This allows a more natural expression
of quantum algorithms that intrinsically make use of measurement, such as
quantum teleportation, superdense coding, and quantum search algorithms.
Moreover, having an operational semantic for measurements gives a way for
understanding the behaviour of this quantum procedure, and this is a possible
topic for future work.

\begin{ack}
 A. D\'{\i}az-Caro would like to thank Pablo E. Mart\'{\i}nez L\'{o}pez for useful comments and helpful suggestions on an early draft of this paper, and the CAPP (QCG) group at the Laboratoire d'Informatique de Grenoble for their hospitality. The authors would also like to thank Simon Perdrix for fruitful discussions.
\end{ack}

\end{document}